\newcommand{\vect}[1]{\boldsymbol{#1}}
\algnewcommand\INPUT{\item[\textbf{Input:}]}%
\algnewcommand\OUTPUT{\item[\textbf{Output:}]}%
\pgfplotsset{compat=newest}
\pgfplotsset{plot coordinates/math parser=false}
\newlength\figureheight
\newlength\figurewidth 
\newcommand{\RNum}[1]{\uppercase\expandafter{\romannumeral #1\relax}}
\def\BibTeX{{\rm B\kern-.05em{\sc i\kern-.025em b}\kern-.08em
		T\kern-.1667em\lower.7ex\hbox{E}\kern-.125emX}}
\newtheorem{mydef}{Definition}
\DeclareMathOperator{\Tr}{Tr}
\newtheorem{lem}{Lemma}
\newtheorem{thm}{Theorem}
\theoremstyle{definition}
\newtheorem{exmp}{Example}[section]
\begin{document}
%
\title{Impact of Spatial Correlation in MIMO Radar}

\author{\IEEEauthorblockN{Aya Mostafa Ahmed and
		Aydin Sezgin}
	\IEEEauthorblockA{Institute of Digital Communication Systems, \\Faculty of Electrical and Computer Engineering,\\ Ruhr-Universität Bochum, Germany\\
		Email: \{aya.mostafaibrahimahmad; aydin.sezgin\}@rub.de}
\and
\IEEEauthorblockN{Eduard A. Jorswieck}
\IEEEauthorblockA{Institute for Communications Technology,\\ Faculty of Electrical and Computer Engineering,\\ TU Braunschweig, Germany\\
Email: e.jorswieck@tu-bs.de}}


%


\maketitle

\begin{abstract}
	The impact of spatial correlation on mutual information (MI) is analyzed for MIMO radar. Unlike the work done in literature for statistical MIMO radar, we consider the spatial correlation of the target matrix elements to study the correlated MIMO radar performance. There is a trade-off between coherent processing gain in correlated  MIMO radar and spatial diversity gain of target scatterers in uncorrelated MIMO radar. We address how the MI between  the received signal and  target channel matrix is affected by spatial correlation. Using majorization theory and the notion of Schur-convexity, we prove that  MI has a changing behavior with respect to spatial correlation, where at low $\mathsf{SNR}$, the MI is Schur-convex, i.e. showing increasing performance as  correlation increases. However, this behavior changes at high $\mathsf{SNR}$, since MI is Schur-concave at high $\mathsf{SNR}$, hence it decreases as the spatial correlation increases. Moreover, we investigate the conditions for spatially uncorrelated MIMO radar. According to these conditions, as the operating frequency increases with respect to the target location and dimensions, the received paths become more uncorrelated. Hence, the setup with lower operating frequency (more correlated) performs better compared to the higher frequency setup at low SNR. However at high $\mathsf{SNR}$, this behavior is reversed.
\end{abstract}
%
\IEEEpeerreviewmaketitle
\section{Introduction}

%

It has been recently shown that applying multiple input multiple output (MIMO) concept in radar systems leads to significant performance improvement \cite{mimoradaridea}. Unlike phased array radar, MIMO radar offers waveform diversity capabilities, sending different  transmit signals, that can be correlated or uncorrelated, and jointly processing the received signals at the receive antennas. Fundamentally, MIMO radar offers more degrees of freedom and more resolution than the phased array radar \cite{MIMODOF}. Such radars can be classified into  colocated or widely distributed (statistical) radar. In colocated MIMO radar, the transmitter and receiver are relatively close, such that the radar observes the same target's radar cross section (RCS). In this case, the radar offers better resolution, higher parameter identifiability and higher sensitivity to detect slow targets \cite{colocatedmimo}. A MIMO radar with widely separated antennas is called statistical MIMO radar. This type of radar captures the spatial diversity of the target's RCS, and with the aid of non-coherent processing,  diversity gain for target detection and parameter estimation can be obtained \cite{MIMOWS}. Moreover, the authors in \cite{MIMOWS} show that by utilizing spatial diversity in statistical MIMO radars, it can overcome bandwidth limitations and offer high resolution target localization. In addition, they derive conditions for spatial de-correlation of the reflected paths to achieve the diversity gain. Those conditions are influenced by the antenna spacing, operating frequency and the target location and dimensions.\\
For both types of radar, the corresponding waveform design problem has been under an on-going research, to optimize target detection or information. In \cite{blum}, the authors proposed waveform design  for MIMO radar to maximize the conditional mutual information (MI) between the target random impulse response and the reflected waveforms. It is shown that waveforms that maximize the MI, also minimize the minimum mean square error (MMSE). The authors in \cite{infoWVD} done similar work but in the presence of colored noise, they show that the optimum waveform in this case should match the target and noise eigen directions.\\
In this paper, we investigate waveform design to maximize MI for statistical MIMO radar. We vary the spatial correlation in different SNR conditions, and analyze how the MI is affected. We use majorization theory and the notion of Schur-convexity to describe analytically this behavior. Specifically, we modify the operating frequency to change the degree of spatial correlation at the receiver. This is due to the fact that, at low operating frequencies, the reflected paths from the target become more correlated, and the other way around for higher frequencies.
Interestingly, we show that MI behaves better under low $\mathsf{SNR}$ condition for correlated channels, however at high $\mathsf{SNR}$, less correlated channels achieve better behavior.\\
The rest of the paper is organized as follows: Section \ref{maj} provides a brief and comprehensive introduction to majorization, and other definitions related to Schur-convexity. Section \ref{SystemModel} presents the statistical MIMO radar model, and analyzes the conditions for spatial de-correlation of a MIMO radar channel. Section \ref{wvd} presents a measure of spatial correlation, and discusses the optimum waveform design for MIMO radar. This section examines the Schur-convexity of the MI function in high and low SNR, then numerical results is provided in section \ref{sim}. Section \ref{conc} draws conclusions.

\section{Preliminaries and Basic Definitions}
\label{maj}
Spatial correlation among the signals received at the receiver has great impact on the performance of the MIMO radar. Since, highly correlated signals would possibly increase the coherent processing gain \cite{Mine}, while uncorrelated signals would emphasize more the spatial diversity of the target scatterers \cite{BLUMspatial}. Therefore, we need to analyze and model the spatial correlation observed from the different paths between the transmitter and receiver. In \cite{MIMOWS}, the authors provide conditions for correlated and uncorrelated MIMO radars, however in this paper we analyze how the spatial correlation affects the system performance. \\Before proceeding with the description of correlation in our model, we introduce some necessary definitions in the following.
\begin{mydef}
we say $\mathbf{x}$ majorizes $\mathbf{y}$ with notion $\mathbf{x}$ $\succeq$ $\mathbf{y}$ if \cite{Jorswieck2007f}
	\begin{equation*}
	\sum_{k=1}^{m} x_k \geq \sum_{k=1}^{m} y_k, m=1,\hdots, n-1 \hspace{0.1in} \text{and}  \sum_{k=1}^{n} x_k = \sum_{k=1}^{n} y_k
	\end{equation*} 
\end{mydef}
Majorization describes a partial order between two vectors $\mathbf{x}$ , $\mathbf{y}$ $\in$ $\mathbb{R}^n$\cite{marshall11}, it depicts if the components of $\mathbf{x}$ is less spread out or more nearly equal than the components of $\mathbf{y}$. The next definition describes the behavior of function $f$ when applied to vectors $\mathbf{x}$ and $\mathbf{y}$. 
\begin{mydef}
	A function $f$ defined on $\mathcal{A}$ $\subset$ $\mathbb{R}^n$ is said to be Schur-convex on $A$ if 
	\begin{equation*}
	\mathbf{x} \succeq \mathbf{y} \hspace{0.05in} \text{on} \hspace{0.05in}  \mathcal{A} \implies f(\mathbf{x}) \geq f(\mathbf{y}),
	\end{equation*}
	and Schur-concave on $\mathcal{A}$ if 
	\begin{equation*}
	\mathbf{x} \succeq \mathbf{y} \hspace{0.05in} \text{on} \hspace{0.05in}  \mathcal{A} \implies f(\mathbf{x}) \leq f(\mathbf{y}).
	\end{equation*}
\end{mydef}
The next lemma provides a condition to test the Schur convexity of a valued vector function.
\begin{lem}[Schur-Ostrowski Condition, {\cite[Lemma 2.5]{Jorswieck2007f}}]
	\label{schurcond}	Let $\mathcal{I}$ $\subset$ $\mathbb{R}$ be an open interval and let $f$ : $\mathcal{I}^n$$\rightarrow$ $\mathbb{R}$ be continously differentiable. $f$ is said to be Schur-convex on $\mathcal{I}^n$ if \\ 
	\begin{center}
		$f$ is symmetric \footnote{A function is symmetric if the argument vector can be arbitraly permuted without changing the value of the function} on $\mathcal{I}$,
	\end{center}
and for all $\mathbf{a}$  $\in$ $\mathcal{I}^n$ 
	\begin{equation}
	\label{schurcond2}
	(x_i-x_j)(\frac{\partial f}{\partial x_i}- \frac{\partial f}{\partial x_j})\geq 0 \hspace{0.1in} \forall \hspace{0.05in}  1 \leq i, j \leq n,
	\end{equation}
	and Schur-concave if the inequality in \eqref{schurcond2} is in the opposite direction \ \cite{Jorswieck2007f}.
\end{lem}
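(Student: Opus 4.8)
The plan is to reduce majorization to its generating elementary operations and then verify the monotonicity of $f$ along each such operation. The key structural fact I would invoke is the Hardy--Littlewood--P\'olya characterization: whenever $\mathbf{x} \succeq \mathbf{y}$ with both vectors in $\mathcal{I}^n$, the vector $\mathbf{y}$ can be obtained from $\mathbf{x}$ by a finite sequence of \emph{T-transforms} (Robin-Hood / Pigou--Dalton transfers), each of which acts on exactly two coordinates by replacing $(x_i, x_j)$ with $(\lambda x_i + (1-\lambda)x_j,\ (1-\lambda)x_i + \lambda x_j)$ for some $\lambda \in [0,1]$, while leaving all remaining coordinates fixed. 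Because $\mathcal{I}$ is an interval, hence convex, every intermediate vector produced this way again lies in $\mathcal{I}^n$, so $f$ stays well defined and differentiable along the whole chain. It therefore suffices to prove that a single T-transform never increases $f$, that is, $f(\mathbf{x}) \ge f(T\mathbf{x})$.

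Next I would realize one T-transform as a smooth equalizing path and differentiate. Fix the two active indices $i,j$ and assume, using the symmetry of $f$ to relabel if necessary, that $x_i \ge x_j$. Define $\mathbf{x}(t)$ by $x_i(t) = x_i - t$ and $x_j(t) = x_j + t$ for $t \in [0, (x_i - x_j)/2]$, holding the remaining entries constant; this traces the family of T-transforms that drive the active pair toward equality. By the chain rule,
\begin{equation*}
\frac{d}{dt}\, f(\mathbf{x}(t)) = -\Big(\frac{\partial f}{\partial x_i} - \frac{\partial f}{\partial x_j}\Big)\Big|_{\mathbf{x}(t)}.
\end{equation*}
Along this path $x_i(t) - x_j(t) \ge 0$, so the Schur--Ostrowski hypothesis \eqref{schurcond2} forces $\partial f/\partial x_i - \partial f/\partial x_j \ge 0$, whence the derivative above is $\le 0$ and $f(\mathbf{x}(t))$ is nonincreasing in $t$.

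Finally I would integrate and assemble. Nonincreasingness gives the value of $f$ at the endpoint of the path no larger than at the start, i.e.\ $f(T\mathbf{x}) \le f(\mathbf{x})$ for each elementary transfer; chaining the finitely many transfers supplied by the Hardy--Littlewood--P\'olya decomposition yields $f(\mathbf{y}) \le f(\mathbf{x})$, which is exactly Schur-convexity. The symmetry assumption enters twice: to relabel so that the active pair is correctly ordered, and to guarantee that the coordinate permutations implicit in majorization leave the value of $f$ unchanged. The Schur-concave case is identical after reversing the inequality in \eqref{schurcond2}, which simply flips the sign of the path derivative. I expect the main obstacle to be the reduction step itself: justifying that an arbitrary majorization relation is realized by a finite product of two-coordinate transfers, and carefully tracking the ordering of coordinates so that each transfer acts on a pair with $x_i \ge x_j$, ensuring \eqref{schurcond2} is applied with the correct sign at every stage.
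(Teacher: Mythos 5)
Your proof is correct, but there is nothing in the paper to compare it against: the paper states this lemma as a quoted result from \cite{Jorswieck2007f} and supplies no proof of its own, so you have effectively reconstructed the standard textbook argument that the citation points to. Your route is the classical Hardy--Littlewood--P\'olya/Marshall--Olkin one: reduce $\mathbf{x}\succeq\mathbf{y}$ to a finite chain of T-transforms (at most $n-1$ suffice), realize each transform as the equalizing path $x_i(t)=x_i-t$, $x_j(t)=x_j+t$, and use the hypothesis \eqref{schurcond2} to show the path derivative $-\bigl(\partial f/\partial x_i-\partial f/\partial x_j\bigr)$ is nonpositive. The two delicate points you flag are real and you handle both: convexity of the interval $\mathcal{I}$ keeps every intermediate vector in the domain, and symmetry of $f$ lets you both relabel the active pair so that $x_i\ge x_j$ and fold a transform with $\lambda<1/2$ into the one with $1-\lambda$ via a coordinate swap. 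Two minor remarks: at the single parameter value where $x_i(t)=x_j(t)$ the hypothesis gives no sign information, but since the strict inequality holds for all smaller $t$ and $f(\mathbf{x}(t))$ is continuously differentiable, monotonicity on the closed interval follows by continuity; and note that you have proved only the sufficiency direction, which is exactly what the lemma as stated asserts (the Schur--Ostrowski condition is in fact also necessary, but the paper does not claim that). As an aside, the same conclusion can be reached without invoking the T-transform decomposition, by connecting $\mathbf{x}$ to $\mathbf{y}$ through the doubly stochastic matrix guaranteed by Birkhoff's theorem and integrating along the straight-line segment, but your elementary-operations version is the more transparent argument and matches the proof in the cited reference.
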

The symmetry condition in Lemma \ref{schurcond} limits its applicability to only symmetric functions. Hence, there have been several works to deal with this restriction. Hwang in \cite{partialordered} generalized the  Schur condition in Lemma \ref{schurcond} for partially ordered sets. He introduced a corresponding notion for the Schur-Ostrowski condition, where  $\frac{\partial f}{\partial x_i} 
\geq \frac{\partial f}{\partial x_j}$ for all $x$ $\in$ $\mathbb{R}^n$ and $i,j= 1,\hdots, n$ where $j$ dominates $i$ in the partially order points and the resulting inequalities ($j > i$).
\begin{thm}\cite{partialordered}
	\label{majorizationgenr}
	Let $f(x_1,\hdots,x_n)$ be a function defined over the domain $\mathcal{D}$, such that $\mathbf{x} = \left[x_1,x_2,\ldots,x_n \right]^T$. Let $\vect{P} = \left[p_1,p_2,\ldots,p_n \right]^T$  be a set of points partially ordered by '$\geq$', and $\vect{a} = \left[a_1,a_2,\ldots,a_n \right]^T$, $\vect{b} = \left[b_1,b_2,\ldots,b_n \right]^T$ be two set of weights where $a_i$ and $b_i$ are associated with $p_i$ for $i=1,\hdots,n$, then \\
	\begin{center}
		$f(a_1,\hdots,a_n)$ $\geq$ $f(b_1,\hdots,b_n)$,
	\end{center}
for all $\vect{a}$ majorizing  $\vect{b}$ on  $\vect{P}$ if and only if $f$ for every $i$ and $j$, $p_i$ $\geq$ $p_j$ fulfills
\\
\begin{equation*}
	 \frac{\partial f}{\partial x_i}\geq \frac{\partial f}{\partial x_j} \hspace{0.2in} \forall \hspace{0.1in}\mathbf{x} \in \mathcal{D}.
\end{equation*}
\end{thm}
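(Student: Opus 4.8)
The plan is to prove both directions of the equivalence through the elementary-transfer (T-transform) characterization of majorization, adapted to the partial order carried by $\vect{P}$. The structural fact I would establish first is that whenever $\vect{a}$ majorizes $\vect{b}$ on $\vect{P}$, the vector $\vect{b}$ can be reached from $\vect{a}$ by a finite sequence of elementary transfers, each of which selects a comparable pair $p_i \geq p_j$ and moves a positive amount of weight from coordinate $i$ down to coordinate $j$, thereby making the weight vector less spread out with respect to the order while keeping the total weight fixed. This is the partially ordered analogue of the Hardy--Littlewood--P\'olya lemma, and I expect it to be the main obstacle: in the totally ordered case one simply transfers between the current largest and smallest discrepant coordinates, but on a partial order one must argue that a comparable discrepant pair is always available and that each transfer preserves the majorization relation at every intermediate step, so that the path from $\vect{a}$ to $\vect{b}$ stays inside the majorization cone.

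For the sufficiency direction ($\Leftarrow$), assume the derivative condition $\frac{\partial f}{\partial x_i} \geq \frac{\partial f}{\partial x_j}$ holds whenever $p_i \geq p_j$. Along a single elementary transfer I would parametrize the straight path $\vect{x}(t) = \vect{x}(0) + t(\vect{e}_j - \vect{e}_i)$ for $t \in [0,\delta]$, which decreases the coordinate at the higher point $p_i$ and increases the coordinate at the lower point $p_j$. Differentiating and using the chain rule gives $\frac{d}{dt} f(\vect{x}(t)) = \frac{\partial f}{\partial x_j} - \frac{\partial f}{\partial x_i} \leq 0$, so $f$ is non-increasing along the transfer. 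Concatenating the finitely many transfers that carry $\vect{a}$ into $\vect{b}$ then yields $f(\vect{a}) \geq f(\vect{b})$, which is the claimed inequality.

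For the necessity direction ($\Rightarrow$), assume $f(\vect{a}) \geq f(\vect{b})$ for every pair with $\vect{a}$ majorizing $\vect{b}$ on $\vect{P}$. Fix an arbitrary $\vect{x} \in \mathcal{D}$ and an arbitrary comparable pair $p_i \geq p_j$. For small $\epsilon > 0$ I would take $\vect{a}_\epsilon = \vect{x} + \epsilon(\vect{e}_i - \vect{e}_j)$ and $\vect{b} = \vect{x}$: adding weight at the higher point $p_i$ and removing the same amount at the lower point $p_j$ can only raise the cumulative sums associated with the order $\vect{P}$ while leaving the total weight unchanged, so $\vect{a}_\epsilon$ majorizes $\vect{x}$ on $\vect{P}$. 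The hypothesis then gives $f(\vect{x} + \epsilon(\vect{e}_i - \vect{e}_j)) \geq f(\vect{x})$; dividing by $\epsilon$ and letting $\epsilon \to 0^+$ produces $\frac{\partial f}{\partial x_i} - \frac{\partial f}{\partial x_j} \geq 0$ by differentiability of $f$. As $\vect{x}$ and the comparable pair were arbitrary, the derivative condition holds throughout $\mathcal{D}$, which establishes the equivalence and completes the proof.
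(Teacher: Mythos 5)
First, note that the paper does not prove this theorem at all: it is quoted verbatim from the cited reference \cite{partialordered} (Hwang's result on majorization over partially ordered sets), so there is no in-paper proof to compare against; your attempt can only be judged on its own merits. On those merits, the architecture is the right one and almost certainly mirrors the classical route: necessity via the perturbation $\vect{a}_\epsilon = \mathbf{x} + \epsilon(\vect{e}_i - \vect{e}_j)$ is correct and complete as you give it --- any up-set of $\vect{P}$ containing $p_j$ must contain $p_i$ when $p_i \geq p_j$, so $\vect{a}_\epsilon$ indeed majorizes $\mathbf{x}$ on $\vect{P}$, and continuous differentiability turns the resulting inequality into the derivative condition (modulo the unstated requirement that $\mathcal{D}$ be open so that $\vect{a}_\epsilon \in \mathcal{D}$ for small $\epsilon$).

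The genuine gap is the one you yourself flag and then leave open: the sufficiency direction rests entirely on the transfer-decomposition lemma (that $\vect{b}$ is reachable from $\vect{a}$ by finitely many weight transfers along comparable pairs), and you assert it as ``the partially ordered analogue of Hardy--Littlewood--P\'olya'' without proving it. This is not a routine citation-level fact --- on a partial order the usual HLP argument of transferring between the largest and smallest discrepant coordinates does not apply, and the existence of a comparable discrepant pair at each step is exactly what must be shown. As written, your proof of the forward inequality is therefore incomplete. The lemma is in fact true, and one clean way to close the gap is a transportation argument: set $\vect{d} = \vect{a} - \vect{b}$, treat coordinates with $d_i > 0$ as supplies and $d_j < 0$ as demands, allow an arc from $i$ to $j$ exactly when $p_i \geq p_j$, and verify the Hall/Gale--Hoffman feasibility condition --- for any set $S$ of demand nodes, the up-set $U(S)$ generated by $S$ satisfies $\sum_{k \in U(S)} d_k \geq 0$ by the majorization hypothesis, which bounds the demand of $S$ by the supply reachable to it. Feasibility yields $\vect{d} = \sum_k \epsilon_k (\vect{e}_{i_k} - \vect{e}_{j_k})$ with $\epsilon_k > 0$ and $p_{i_k} \geq p_{j_k}$, after which your chain-rule computation along each segment finishes the argument (here you also need the segments to lie in $\mathcal{D}$, e.g.\ $\mathcal{D}$ convex --- state this). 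With the lemma supplied, your proof is correct.
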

	The following definition provides a measure for correlation to compare between two covariance matrices.
\begin{mydef}[{\cite[Definition 4.2]{Jorswieck2007f}}]
	\label{def1}
	If we have two arbitrary target covariance matrices, $\mathrm{\mathbf{R}}_{\bar{\mathbf{h}}}^1$ and $\mathrm{\mathbf{R}}_{\bar{\mathbf{h}}}^2$,  with eigenvalues $\boldsymbol{\sigma}_{h_{1}}$ , and $\boldsymbol{\sigma}_{h_{2}}$ respectively, arranged in descending order such that $\sigma_{h_{1},1}\geq \sigma_{h_{1},2}\geq \hdots \geq\sigma_{h_{1},T}\geq 0$ and $\sigma_{h_{2},1}\geq \sigma_{h_{2},2}\geq \hdots \geq\sigma_{h_{2},T}\geq 0$, where $T=\mathrm{MN}$, with constraint that $\Tr(\mathrm{\mathbf{R}}_{\bar{\mathbf{h}}}^1)$=$\Tr(\mathrm{\mathbf{R}}_{\bar{\mathbf{h}}}^2)$. we say that $\mathrm{\mathbf{R}}_{\bar{\mathbf{h}}}^1$ is more correlated than $\mathrm{\mathbf{R}}_{\bar{\mathbf{h}}}^2$, if $\boldsymbol{\sigma}_{h_{1}}$ $\succcurlyeq$ $\boldsymbol{\sigma}_{h_{2}}$ such that
	\begin{equation}
	\sum_{l=1}^{\mathrm{L}}\sigma_{h_{1},l} \geq \sum_{l=1}^{\mathrm{L}}\sigma_{h_{2},l} \hspace{0.1in}  \text{for} \hspace{0.1in} 1\leq L \leq T-1.
	\end{equation} 
\end{mydef}
This definition is different from the usual statistical correlation definition. Normally in statistics, a diagonal covariance matrix is uncorrelated, independent from the values of auto-covariances on its diagonal. In definition \eqref{def1}, the target covariance matrices are uncorrelated, if the auto-covariances on the diagonal are equal in addition to the statistical independence \cite[Remark 4.1]{Jorswieck2007f}.
\\
This means that the larger the sum of the first $l$ eigenvalues of the covariance matrix of the target are the more correlated are the scattepaths arriving at the receiver from the target. This leads to further insight that if the covariance matrix of a radar target having the most uncorrelated paths, would have equal eigen values, while the target covariance matrix with the most correlated paths would have only one non-zero eigen value.

\section{System Model}
\label{SystemModel}
Assume a distributed target consisting of $Q$ scatterers, each scatterer is considered as independent, and isotropic. The target is illuminated by statistical MIMO radar with widely separated antennas with $M$ transmitters and $N$ receivers as in Figure \ref{fig:targets}, with transmitter $m$ at position $t_m$$=$ $(x_{tm}$, $y_{tm})$, and receiver $n$ at position  $r_n$$=$ ($x_{rn}$, $y_{rn}$). The scattered signal from one scatterer $q$ located at position $t_q=(x_q,y_q)$, received at $r_n$ at time instant $k$ is given by
\begin{equation}
y_n^{q}(k)=\sum_{m=1}^{M}h^{q}_{mn}s_m(k-\tau_{tm}(t_q)+\tau_{rn}(t_q))+w_n(k),
\end{equation}
where $s_m(k)$ is the waveform transmitted by transmitter $m$, $w_n(k)$ is the noise at receiver $n$. Defining $h_{mn}$ as the channel from $m$ to
 receiver $n$ for all $Q$ scatterers, which is given by 
\begin{equation}
h_{mn}=\sum_{q=1}^{Q}\alpha_q \exp(-j2\pi f_c[\tau_{tm}(t_q)+\tau_{rn}(t_q)])
\end{equation}
defining $\alpha_q$ as the reflectivity of the scatterer, which is a zero mean, i.i.d complex Gaussian random variable with variance of $1/Q$ \cite{MimoRadar}, and  $\tau_{tm}(t_q)$$=$$d(t_m,t_q)/c$ is the propagation time delay between transmitter $m$ located at position $t_m$ and scatterer $q$, where $d(t_m,t_q)$ is the distance between  $m$ and $q$, 	and $c$ is the speed of light. Accordingly $\exp(-j2\pi f_c\tau_{tm}(t_q))$ is the phase shift due to the propagation from $m$ to $q$, and similarly $\exp(-j2\pi f_c\tau_{rn}(t_q))$ is the phase shift due to propagation from scatterer $q$ till receiver $n$, where $\tau_{rn}(t_q)$ is the propagation time delay between $q$ and $n$. Similar to \cite{MIMOWS}, we assume that the bandwidth of the waveform transmitted is not wide enough to resolve individual scatterers. Therefore, we assume that $s_m(k-\tau_{tm}(t_q)+\tau_{rn}(t_q))$ $\approx$ $s_m(k-\tau_{tm}(t_0)+\tau_{rn}(t_0))$, where we assume that the radar cross section of the target (RCS) has center of gravity located at $t_0=(x_0,y_0)$.
Furthermore, the path gains $h_{mn}$ is organized in a $N \times M$ matrix $\mathbf{H}$, as shown in \cite{MIMOWS}, the structure of this matrix is
\begin{equation}
\mathbf{H}=\mathbf{K}\mathbf{\Sigma}\mathbf{G}.
\end{equation} 
The transmit paths are organized in a $Q\times M$ matrix $\mathbf{G}$, where $\mathbf{G}=[\mathbf{g}_1^T;\hdots;\mathbf{g}_Q^T]$, where $\mathbf{g}_q^T=[\exp(-j2\pi f_c\tau_{t1}(t_q)),\hdots,\exp(-j2\pi f_c\tau_{tM}(t_q))]$. The receive paths are in a $N\times Q$ matrix $\mathbf{K}$, where  $\mathbf{K}$$=$ $[\mathbf{k}_1,\hdots,\mathbf{k}_Q]$, and $\mathbf{k}_q^T=[\exp(-j2\pi f_c\tau_{r1}(t_q)),\hdots,\exp(-j2\pi f_c\tau_{rN}(t_q))]$. The reflectivity of all scatterers is organized in a diagonal $Q\times Q$ matrix $\boldsymbol{\Sigma}$, where 
$\boldsymbol{\Sigma}=$$\text{diag}([\alpha_1,\hdots,\alpha_Q])$.
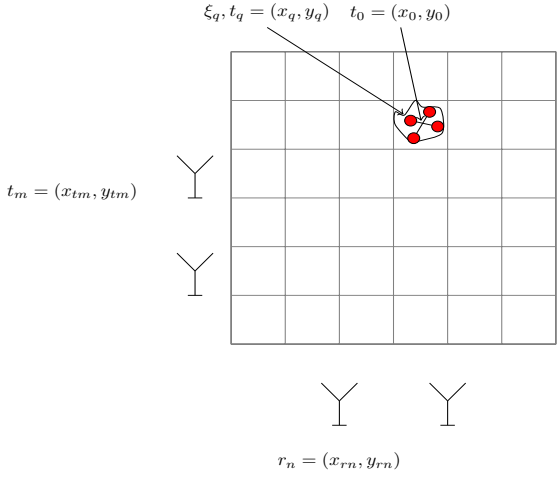
\begin{figure}
	\begin{frame}

	\resizebox{3in}{2.5in}{


		\begin{tikzpicture}[scale=0.5,every node/.style={scale=0.5}] 

		\draw [step=1.5cm,gray] (0,0) grid  (9,9) rectangle (0,0);

		\draw [black] plot [smooth cycle] coordinates {(5,6.2) (4.55,6.75) (4.55,7.15) (4.8,7.05) (5.1,7.5) (5.3,7.3)(5.8,7.25)(5.85,6.5)(5.2,6.25)};

		\node [draw=black, fill=red, circle, radius=0.2cm] (a) at (5.06,6.34) {} ;

		\node [draw=black, fill=red, circle, radius=0.2cm] (b) at (5.725,6.7) {} ;

		\node [draw=black, fill=red, circle, radius=0.2cm] (c) at (5.5,7.15) {} ;

		\node [draw=black, fill=red, circle, radius=0.2cm] (d) at (4.975,6.88) {} ;

		\draw [black] (a) to (c);

		\draw [black] (b) to (d);


		\draw [black] (-1.2,4.5) to (-.8,4.5);

		\draw [black] (-1,4.5) to (-1,5.25) node[below left =.05cm and .7cm,scale=1.5] {$t_m=(x_{tm},y_{tm})$}; 

		\draw [black] (-1,5.25) to (-.5,5.8);

		\draw [black] (-1,5.25) to (-1.5,5.8);


		\draw [black] (-1.2,1.5) to (-.8,1.5);

		\draw [black] (-1,1.5) to (-1,2.25); 

		\draw [black] (-1,2.25) to (-.5,2.8);

		\draw [black] (-1,2.25) to (-1.5,2.8);


		\draw [black] (2.8,-2.5) to (3.2,-2.5);

		\draw [black] (3,-2.5) to (3,-1.75) node[below = .7cm,scale=1.5] {$r_n=(x_{rn},y_{rn})$}; 

		\draw [black] (3,-1.75) to (3.5,-1.2);

		\draw [black] (3,-1.75) to (2.5,-1.2);


		\draw [black] (5.8,-2.5) to (6.2,-2.5);

		\draw [black] (6,-2.5) to (6,-1.75); 

		\draw [black] (6,-1.75) to (6.5,-1.2);

		\draw [black] (6,-1.75) to (5.5,-1.2);

		\draw [->](1,9.75) node [above,scale=1.5] {$\xi_q,t_q=(x_q,y_q)$} to ($(d)+(-0.2,0.2)$);

		\draw [->](4.7,9.75) node [above,scale=1.5] {$t_0=(x_0,y_0)$} to ($(d)+(0.29,0.0)$);

		\end{tikzpicture}


	}

\end{frame}
	\caption{MIMO radar with an extended target, containing of four point targets.}
	\label{fig:targets}
\end{figure}
\\
Therefore, we can obtain the total received signal across all $K$ time samples as
\begin{equation}
\label{perrx}
\mathbf{y}_n=\mathbf{h}_n^T\mathbf{S}^T+\mathbf{w}_n,
\end{equation}
 where $\mathbf{y}_n=[y_{n}(1)\:y_{n}(2) \hdots y_{n}(K)]$, $\mathbf{h}_n = [h_{1n}\:h_{2_n} \hdots h_{Mn}]^T$, $\mathbf{S}=[\mathbf{s}(1) \: \mathbf{s}(2) \hdots \mathbf{s}(K)]^T$, where $\mathbf{s}(k)=[s_1(k)\: s_2(k) \hdots s_M(k)]$. We assume that $K \geq \max(M,N)$.
From \eqref{perrx}, we define the received signal from all the antennas as
\begin{equation}
\label{withoutvec}
\mathbf{Y}=\mathbf{S}\mathbf{H}+\mathbf{W},
\end{equation}
 in which $\mathbf{Y}$ $\in$ $\mathbb{C^\mathrm{K\times N}}$, $\mathbf{Y}=[\mathbf{y}_1 \: \mathbf{y}_2 \hdots \mathbf{y}_N]$, $\mathbf{H}$ $\in$ $\mathbb{C^\mathrm{M\times N}}$ is the target scattering matrix containing all the path gains $h_{mn}$ from transmit to receive antennas, $\mathbf{W}$ $\in$ $\mathbb{C^\mathrm{K\times N}}$ is a colored noise matrix with independent and identically (i.i.d) distributed columns, where $ \mathbf{W}=[\mathbf{w}_1 \: \mathbf{w}_2 \hdots \mathbf{w}_N]$. Moreover we define $\bar{\mathbf{y}}=\text{vec}(\mathbf{Y})$, $\bar{\mathbf{h}}=\text{vec}(\mathbf{H})$, and $\bar{\mathbf{w}}=\text{vec}(\mathbf{W})$, where $\text{vec}(\mathbf{X})$ is obtained by column wise staking of the matrix $\mathbf{X}$.  Consequently, \eqref{withoutvec} can be rewritten as
\begin{equation}
\bar{\mathbf{y}}=\tilde{\mathbf{S}}\bar{\mathbf{h}}+\bar{\mathbf{w}},
\end{equation}
 where $\tilde{\mathbf{S}}=\mathbf{I}_{N}\otimes\mathbf{S}$. We assume that $\mathbf{H}$ and $\mathbf{W}$ are independent, with distributions \[
\bar{\mathbf{h}} \sim \mathcal{CN}(0,\mathrm{\mathbf{R}}_{\bar{\mathbf{h}}})\,,
\] 
 \[
\bar{\mathbf{w}} \sim \mathcal{CN}(0,\mathrm{\mathbf{R}}_{\bar{\mathbf{w}}})\,,
\] 
where $\mathrm{\mathbf{R}}_{\bar{\mathbf{h}}}$ $\in$ $\mathbb{C^\mathrm{MN\times MN}}$ is positive semidefinite correlation matrix of the target, defined as 
$\mathrm{\mathbf{R}}_{\bar{\mathbf{h}}}$$=\mathop{\mathbb{E}}[\bar{\mathbf{h}} \bar{\mathbf{h}}^\mathrm{H}]$ and  $\mathrm{\mathbf{R}}_{\bar{\mathbf{w}}}$ $\in$ $\mathbb{C^\mathrm{NK\times NK}}$ is a positive semidefinite correlation matrix of the noise. Let the eigen-decomposition of $\mathrm{\mathbf{R}}_{\bar{\mathbf{h}}}$ and $\mathrm{\mathbf{R}}_{\bar{\mathbf{w}}}$ be
\begin{equation*}
\mathrm{\mathbf{R}}_{\bar{\mathbf{h}}}=\mathrm{\mathbf{V}_h}\boldsymbol{\Sigma}_h\mathrm{\mathbf{V}_h^H},
\end{equation*}
\begin{equation*}
\mathrm{\mathbf{R}}_{\bar{\mathbf{w}}}=\mathrm{\mathbf{V}_w}\boldsymbol{\Sigma}_w\mathrm{\mathbf{V}_w^H},
\end{equation*}
where $\mathrm{\mathbf{V}_h}$, $\mathrm{\mathbf{V}_w}$ are unitary matrices, while $\boldsymbol{\Sigma}_h$, and $\boldsymbol{\Sigma}_w$ are diagonal matrices, with vectors $\boldsymbol{\sigma}_h$, $\boldsymbol{\sigma}_w$ on the diagonals respectively, such that  $\boldsymbol{\sigma}_h=([\sigma_{h,1},\sigma_{h,2},\hdots,\sigma_{h,\mathrm{MN}})]$, $(\boldsymbol{\sigma}_w)=([\sigma_{w,1},\sigma_{w,2},\hdots,\sigma_{w,\mathrm{NK}}])$ are diagonal matrices whose elements are arranged in descending order.

Suppose that there are two transmit antennas at location $(x_{tm},y_{tm})$ and $(x_{ti},y_{ti})$ respectively, while the receive ones are $(x_{rn},y_{rn})$ and $(x_{rj},y_{rn})$ respectively. Furthermore, the target dimensions is defined as $d_x$ along x axis and $d_y$ along y axis.  If at least one of the following conditions is met, then the channel is considered as uncorrelated.
\\
	There are four conditions for spatial de-correlation of the channel elements $h_{mn}$ \cite{MIMOWS}.
	\begin{equation}
	\label{eq:conditions}
	\begin{split}
	\frac{x_{tm}}{d(t_m,t_0)}-\frac{x_{ti}}{d(t_m,t_0)} > \frac{\lambda_c}{d_x}\\
	\frac{y_{t_m}}{d(t_m,t_0)}-\frac{y_{ti}}{d(t_m,t_0)} > \frac{\lambda_c}{d_y}\\
	\frac{x_{rn}}{d(r_n,t_0)}-\frac{x_{rj}}{d(r_n,t_0)} > \frac{\lambda_c}{d_x}\\
	\frac{y_{rn}}{d(r_n,t_0)}-\frac{y_{rj}}{d(r_n,t_0)} > \frac{\lambda_c}{d_y},
	\end{split}
	\end{equation}

where $\lambda_c$ is the operating wavelength. As noticed from the previous conditions, changing any of the following factors would affect the spatial de-/correlation of the channel matrix,
\begin{enumerate}
	\label{factors}
	\item Spacing between transmit / receive antennas
	\item Operating frequency 
	\item Target Dimensions
	\item Distance between the target and the antennas.
\end{enumerate}
Consequently, those factors would affect the eigenvalue distribution of the target covariance matrix, which would in turn affect the Schur-convexity/Schur-concavity of the MI. For further insights into those conditions, let us apply what was previously discussed in section \ref{sectionMIschur}, where on one hand having a spatially correlated channel matrix $\mathbf{H}$ is better at low $\mathsf{SNR}$ from MI perspective, while on the other hand a de-correlated channel is better at high $\mathsf{SNR}$.
\section{Optimum waveform design and impact of spatial correlation}
\label{wvd}
The measure of correlation defined in \eqref{def1} allows us to analyze the impact of spatial correlation on performance measures for waveform design. Indeed, we will investigate how the waveform design for maximizing the mutual information (MI) between $\bar{\mathbf{y}}$ and $\bar{\mathbf{h}}$ can be affected by the spatial correlation of $\bar{\mathbf{h}}$.
\subsection{Waveform Design based on maximizing Mutual Information}
The mutual information between $\bar{\mathbf{y}}$, and $\bar{\mathbf{h}}$, if the transmitted waveform is known, is given by \cite{infoWVD}
\begin{equation}
I(\bar{\mathbf{y}};\bar{\mathbf{h}}|\tilde{\mathbf{S}}) = N[\log[\det(\tilde{\mathbf{S}} \mathrm{\mathbf{R}}_{\bar{\mathbf{h}}}\tilde{\mathbf{S}}^H+\mathrm{\mathbf{R}}_{\bar{\mathbf{w}}}]-\log\det(\mathrm{\mathbf{R}}_{\bar{\mathbf{w}}})]] .
\end{equation}
Then, the optimization problem of waveform design to maximize the MI can be formulated as 
\begin{equation}
\label{main}
\begin{aligned}
& \underset{\tilde{\mathbf{S}}}{\text{max}}
& & \log[\det(\tilde{\mathbf{S}} \mathrm{\mathbf{R}}_{\bar{\mathbf{h}}}\tilde{\mathbf{S}}^H\mathrm{\mathbf{R}}_{\bar{\mathbf{w}}}^{-1}+\mathrm{\mathbf{I}_{NK}})] \\
& \text{s.t.}
& &  \Tr{(\tilde{\mathbf{S}}\tilde{\mathbf{S}}^H)} \leq P_\mathsf{tot}.
\end{aligned}
\end{equation}
\begin{lem}\cite{Fiedler}
	\label{MIoptimlemma}
The  optimum waveform for maximizing MI is the following
\begin{equation}
\label{optim}
\mathbf{\tilde{S}_{opt}}=\mathrm{\mathbf{V}_w} \left[\mathbf{0}_{\mathrm{MN}\times(\mathrm{NK}-\mathrm{MN})} \hspace{0.2in}\mathrm{\mathbf{\Sigma_s}}^{1/2}\right]^T\mathrm{\mathbf{V}_H}^H.
\end{equation}  
 $\mathrm{\mathbf{\Sigma_s}}$ is a square diagonal matrix, $\mathrm{\mathbf{\Sigma_s}}$ $\in$ $\mathbb{C^\mathrm{MN\times MN}}$ with elements $\sigma_{s,i}$ on its diagonal.
\end{lem}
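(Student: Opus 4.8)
The plan is to turn \eqref{main} into a parallel (scalar) power-allocation problem by diagonalizing the objective with the eigenbases of the two covariance matrices, and then to pin down the optimal orientation of $\tilde{\mathbf{S}}$ by a Fiedler-type determinant inequality; the water-filling over the resulting sub-channels then supplies the diagonal $\boldsymbol{\Sigma}_s$.

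First I would whiten and symmetrize. Factoring $\mathbf{R}_{\bar{\mathbf{w}}}^{-1}=\mathbf{R}_{\bar{\mathbf{w}}}^{-1/2}\mathbf{R}_{\bar{\mathbf{w}}}^{-1/2}$ and applying Sylvester's identity $\det(\mathbf{I}+\mathbf{A}\mathbf{B})=\det(\mathbf{I}+\mathbf{B}\mathbf{A})$, the cost in \eqref{main} becomes
\begin{equation*}
\log\det\!\left(\mathbf{I}_{MN}+\mathbf{R}_{\bar{\mathbf{h}}}^{1/2}\tilde{\mathbf{S}}^{H}\mathbf{R}_{\bar{\mathbf{w}}}^{-1}\tilde{\mathbf{S}}\,\mathbf{R}_{\bar{\mathbf{h}}}^{1/2}\right).
\end{equation*}
Substituting $\mathbf{R}_{\bar{\mathbf{h}}}=\mathbf{V}_h\boldsymbol{\Sigma}_h\mathbf{V}_h^{H}$, $\mathbf{R}_{\bar{\mathbf{w}}}=\mathbf{V}_w\boldsymbol{\Sigma}_w\mathbf{V}_w^{H}$ and the SVD $\tilde{\mathbf{S}}=\mathbf{U}_s\boldsymbol{\Lambda}_s\mathbf{W}_s^{H}$, the objective depends on the data only through the singular values in $\boldsymbol{\Lambda}_s$ and the two unitary ``mismatch'' factors $\mathbf{V}_w^{H}\mathbf{U}_s$ and $\mathbf{W}_s^{H}\mathbf{V}_h$. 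Since $\Tr(\tilde{\mathbf{S}}\tilde{\mathbf{S}}^{H})=\Tr(\boldsymbol{\Lambda}_s\boldsymbol{\Lambda}_s^{H})$ involves only the singular values, the feasible set splits into an orientation part and a power part, which I can optimize separately.

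The core step fixes the singular values and optimizes over the orientation. Here I would invoke the Fiedler determinant inequality (the cited result): for fixed eigenvalues of $\mathbf{R}_{\bar{\mathbf{h}}}$, $\mathbf{R}_{\bar{\mathbf{w}}}$ and fixed singular values of $\tilde{\mathbf{S}}$, the determinant above is maximized precisely when the two mismatch factors reduce to (generalized) permutations, i.e. when the right singular vectors of $\tilde{\mathbf{S}}$ are aligned with the columns of $\mathbf{V}_h$ and its left singular vectors are chosen among the columns of $\mathbf{V}_w$. Because $\tilde{\mathbf{S}}$ is the tall $NK\times MN$ factor and $\mathbf{R}_{\bar{\mathbf{w}}}^{-1}$ weights noise directions inversely, placing the $MN$ nonzero singular values on the eigen-directions of $\mathbf{R}_{\bar{\mathbf{w}}}$ with the \emph{smallest} eigenvalues is optimal; with $\boldsymbol{\sigma}_w$ sorted in descending order these are the last $MN$ columns of $\mathbf{V}_w$, which is exactly what the zero block $\mathbf{0}_{MN\times(NK-MN)}$ and the placement of $\boldsymbol{\Sigma}_s^{1/2}$ in \eqref{optim} encode (with $\mathbf{V}_H\equiv\mathbf{V}_h$). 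This diagonalizing choice makes $\mathbf{R}_{\bar{\mathbf{h}}}^{1/2}\tilde{\mathbf{S}}^{H}\mathbf{R}_{\bar{\mathbf{w}}}^{-1}\tilde{\mathbf{S}}\mathbf{R}_{\bar{\mathbf{h}}}^{1/2}$ diagonal in the $\mathbf{V}_h$ basis and decouples the problem into $MN$ parallel scalar sub-channels.

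Finally, with the orientation fixed, the objective reduces to $\sum_{i=1}^{MN}\log\!\left(1+\sigma_{h,i}\,\sigma_{s,i}\,\tilde{\sigma}_{w,i}^{-1}\right)$, where $\sigma_{s,i}$ are the diagonal entries of $\boldsymbol{\Sigma}_s$ and $\tilde{\sigma}_{w,i}$ are the $MN$ smallest noise eigenvalues, under the linear constraint $\sum_i\sigma_{s,i}\le P_{\mathsf{tot}}$. This is a concave program whose maximizer is the standard water-filling allocation and defines $\boldsymbol{\Sigma}_s$; assembling the optimal orientation with this diagonal yields the stated form \eqref{optim}. I expect the alignment step to be the main obstacle: the orientation variables live on the non-convex unitary manifold, so a plain first-order (KKT) argument does not settle optimality, and one genuinely needs the Fiedler-type inequality, together with a rearrangement argument to confirm that pairing the channel gains, the allocated powers and the smallest noise eigenvalues in the matched order is optimal and consistent with the rank-deficient, zero-padded structure of $\tilde{\mathbf{S}}$.
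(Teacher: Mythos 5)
Your proposal is correct and follows essentially the same route as the paper, which states the lemma by citing Fiedler and justifies it exactly as you do: the oppositional eigenvector alignment of $\tilde{\mathbf{S}}$ with $\mathrm{\mathbf{V}_w}$ (noise eigenvalues in increasing order) and $\mathrm{\mathbf{V}_h}$ (target eigenvalues in decreasing order) via the Fiedler determinant bound of Theorem~\ref{fiedler}, followed by water-filling on the decoupled scalar problem \eqref{maineig}; your whitening/Sylvester preamble is only a cosmetic variant of this. Your reading of the zero block as loading the $\mathrm{MN}$ \emph{smallest} noise eigen-directions is also the consistent interpretation (the paper's index $\sigma_{w,\mathrm{MN}-i+1}$ in \eqref{maineig} tacitly assumes $\mathrm{NK}=\mathrm{MN}$, as in its simulations).
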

It should be mentioned that in \eqref{optim}, the left singular vector of the  optimum waveform refers to the eigenvector of the noise covariance matrix in increasing order, while the right singular values refer to the eigen vector of the covariance matrix which should be in decreasing order, i.e. the eigenvalues of the noise and the target are sorted in oppositional order according to the following theorem.
\begin{thm}\cite{Fiedler}
	\label{fiedler}
For positive semidefinite matrices $\mathrm{\mathbf{A}}$ and  $\mathrm{\mathbf{B}}$, with eigenvalues $\alpha_1 \ge \alpha_2...\ge \alpha_n$, $\beta_1 \ge \beta_2...\ge \beta_n$.
\begin{equation}
\prod_{i=1}^{n}(\alpha_i+\beta_i)\leq \det(\mathbf{A} + \mathbf{B})\leq\prod_{i=1}^{n}(\alpha_i+\beta_{n+1-i}).
\end{equation}
Hence, if the eigen value decomposition of $\mathrm{\mathbf{A}}=\mathrm{\mathbf{U}_A}\boldsymbol{\Lambda}_A\mathrm{\mathbf{U}_A^H}$ and $\mathrm{\mathbf{B}}=\mathrm{\mathbf{U}_B}\boldsymbol{\Lambda}_B\mathrm{\mathbf{U}_B^H}$, then the upper bound is achieved for $\mathrm{\mathbf{U}_A}=$$\mathbf{P}\mathrm{\mathbf{U}_B}$, where $\mathbf{P}$ is a permutation matrix with ones on the anti-diagonal such that
\begin{center}
$ \mathbf{P}=
\begin{bmatrix}
0 & 0 &\hdots & 1 \\
0 &\hdots & 1& 0\\
\vdots & \vdots & \vdots &\vdots \\
1 & 0 & \hdots & 0
\end{bmatrix}$,
\end{center}
\end{thm}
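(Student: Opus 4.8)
The plan is to reduce the two-sided bound to a pair of eigenvalue-majorization relations for sums of Hermitian matrices and then feed them into the Schur-concavity of $\log\det$, reusing the machinery of Section~\ref{maj}. Write the decreasingly ordered eigenvalue vectors $\boldsymbol{\alpha}=(\alpha_1,\dots,\alpha_n)$ and $\boldsymbol{\beta}=(\beta_1,\dots,\beta_n)$, let $\boldsymbol{\beta}^{\uparrow}=(\beta_n,\dots,\beta_1)$ be the reversal, and let $\boldsymbol{\lambda}=(\lambda_1,\dots,\lambda_n)$ collect the eigenvalues of $\mathbf{A}+\mathbf{B}$ in decreasing order. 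Since $\det(\mathbf{A}+\mathbf{B})=\prod_i\lambda_i$, the statement is equivalent to $\prod_i(\alpha_i+\beta_i)\le\prod_i\lambda_i\le\prod_i(\alpha_i+\beta_{n+1-i})$.

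The two inputs I would invoke are the classical majorization bounds for the spectrum of a sum of Hermitian matrices (Marshall--Olkin, cf.~\cite{marshall11,Jorswieck2007f}): the spectrum is most spread when the summands are ``aligned'' and least spread when ``anti-aligned'', i.e. $\boldsymbol{\alpha}+\boldsymbol{\beta}\succeq\boldsymbol{\lambda}\succeq\boldsymbol{\alpha}+\boldsymbol{\beta}^{\uparrow}$. Next, consider $\phi(\mathbf{x})=\sum_i\log x_i=\log\prod_i x_i$ on the positive orthant; it is symmetric and concave, hence Schur-concave, so that $\mathbf{u}\succeq\mathbf{v}$ implies $\phi(\mathbf{u})\le\phi(\mathbf{v})$ by the definition of Schur-concavity (Definition~2). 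Applying this to $\boldsymbol{\alpha}+\boldsymbol{\beta}\succeq\boldsymbol{\lambda}$ gives $\prod_i(\alpha_i+\beta_i)\le\det(\mathbf{A}+\mathbf{B})$, the lower bound; applying it to $\boldsymbol{\lambda}\succeq\boldsymbol{\alpha}+\boldsymbol{\beta}^{\uparrow}$ gives $\det(\mathbf{A}+\mathbf{B})\le\prod_i(\alpha_i+\beta_{n+1-i})$, the upper bound.

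For the achievability claim, note that if $\mathbf{U}_A=\mathbf{P}\,\mathbf{U}_B$ with $\mathbf{P}$ the anti-diagonal permutation, then $\mathbf{A}$ and $\mathbf{B}$ are simultaneously diagonalized with their eigenvalues listed in opposite orders; the common basis then diagonalizes $\mathbf{A}+\mathbf{B}$ with entries $\alpha_i+\beta_{n+1-i}$, so $\det(\mathbf{A}+\mathbf{B})=\prod_i(\alpha_i+\beta_{n+1-i})$ exactly and the upper bound is attained (symmetrically, $\mathbf{U}_A=\mathbf{U}_B$ attains the lower bound). The semidefinite, rather than definite, case, where $\mathbf{A}+\mathbf{B}$ may be singular and $\log$ diverges, is dispatched by a standard continuity argument: replace $\mathbf{A}$ by $\mathbf{A}+\varepsilon\mathbf{I}$, apply the above, and let $\varepsilon\downarrow 0$.

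The main obstacle is establishing the lower majorization $\boldsymbol{\lambda}\succeq\boldsymbol{\alpha}+\boldsymbol{\beta}^{\uparrow}$, which rests on Weyl's inequalities $\lambda_{i+j-n}(\mathbf{A}+\mathbf{B})\ge\alpha_i+\beta_j$ together with the trace identity $\sum_i\lambda_i=\Tr\mathbf{A}+\Tr\mathbf{B}$; this is the nontrivial spectral input, and is where I would either cite the result or prove it via Ky Fan's extremal characterization of partial eigenvalue sums. As an alternative self-contained route that avoids quoting these relations, one can instead optimize $g(\mathbf{U})=\log\det(\mathbf{A}+\mathbf{U}\boldsymbol{\Lambda}_B\mathbf{U}^H)$ over the compact unitary group: differentiating along $\mathbf{U}\mapsto\mathbf{U}e^{t\mathbf{X}}$ with $\mathbf{X}$ skew-Hermitian shows that every critical point forces $\boldsymbol{\Lambda}_B$ to commute with $\mathbf{U}^H\mathbf{A}\mathbf{U}$, so the critical values are exactly the permuted products $\prod_i(\alpha_{\pi(i)}+\beta_i)$; compactness then identifies the global minimum and maximum with the extreme permutations, and the rearrangement inequality (the submodularity of $(a,b)\mapsto\log(a+b)$, since its mixed partial is negative) shows these are the sorted and reversed pairings. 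In this route the delicate step is handling repeated eigenvalues, where commutation no longer forces $\mathbf{U}^H\mathbf{A}\mathbf{U}$ to be diagonal and a perturbation argument is needed.
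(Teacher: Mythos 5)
The paper contains no proof of this statement at all: Theorem~\ref{fiedler} is imported verbatim from \cite{Fiedler} as a cited auxiliary result, so there is no in-paper argument to compare yours against, and your attempt has to be judged against the literature. On that basis it is essentially correct, and your first route is arguably more natural in the context of this paper than Fiedler's original determinant-specific argument, since it runs entirely on the majorization machinery of Section~\ref{maj}: the aligned/anti-aligned spectral majorizations $\boldsymbol{\alpha}+\boldsymbol{\beta}\succeq\boldsymbol{\lambda}(\mathbf{A}+\mathbf{B})\succeq\boldsymbol{\alpha}+\boldsymbol{\beta}^{\uparrow}$, fed into the Schur-concavity of $\sum_i\log x_i$, give both determinant bounds in two lines and in fact prove more (the same sandwich for \emph{every} Schur-concave function of the spectrum, not only $\det$). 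Your variational second route is also sound: the critical-point computation correctly forces $[\boldsymbol{\Lambda}_B,\mathbf{U}^H\mathbf{A}\mathbf{U}]=\mathbf{0}$, and your worry about repeated eigenvalues is unnecessary --- any two commuting Hermitian matrices admit a common eigenbasis regardless of multiplicities, so the critical values are always of the form $\prod_i(\alpha_{\pi(i)}+\beta_i)$ and the submodularity/rearrangement step finishes the proof with no perturbation argument needed there.

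Two caveats. First, in route one you understate the cost of the key lemma: the anti-aligned majorization $\boldsymbol{\lambda}(\mathbf{A}+\mathbf{B})\succeq\boldsymbol{\alpha}+\boldsymbol{\beta}^{\uparrow}$ does \emph{not} follow from Weyl's inequalities plus the trace identity alone, since Weyl controls single eigenvalues and hence only delivers the extreme partial sums ($k=1$, and $k=n-1$ via the complementary upper bounds $\lambda_{i+j-1}\le\alpha_i+\beta_j$); the general partial-sum inequalities require the Lidskii--Wielandt multi-index inequalities (see \cite{marshall11}), and Ky Fan's extremal characterization, which you offer as a fallback, yields the upper majorization $\boldsymbol{\alpha}+\boldsymbol{\beta}\succeq\boldsymbol{\lambda}$ but not the lower one --- so cite Lidskii--Wielandt explicitly or fall back on your self-contained variational route. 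Second, on achievability: with the decompositions as written, $\mathbf{U}_A=\mathbf{P}\mathbf{U}_B$ does not simultaneously diagonalize $\mathbf{A}$ and $\mathbf{B}$ (left-multiplication by $\mathbf{P}$ reverses the entries of each eigenvector, not the order of the columns); the condition that realizes the mechanism you describe is $\mathbf{U}_A=\mathbf{U}_B\mathbf{P}$, which gives $\mathbf{A}+\mathbf{B}=\mathbf{U}_B\,\mathrm{diag}(\alpha_{n+1-i}+\beta_i)\,\mathbf{U}_B^H$ and hence attainment of the upper bound. That slip originates in the statement itself as printed in the paper, but your write-up should use the corrected condition; the $\varepsilon$-regularization for the singular semidefinite case is fine as is.
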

and the lower bound is achieved for $\mathrm{\mathbf{U}_A}=$$\mathrm{\mathbf{U}_B}$.
Then we can solve for the  power allocation of the singular values $\sigma_{s,i}$ of the optimal waveform $\mathbf{\tilde{S}_{opt}}$ in \eqref{optim} by rewriting \eqref{main} as                                       
\begin{equation}
\label{maineig}
\begin{aligned}
& \underset{\sigma_{s,i}}{\text{max}}
& & \sum_{i=1}^{\mathrm{MN}}\log\left(\frac{\sigma_{s,i}\hspace{0.03in}\sigma_{h,i}}{\sigma_{w,\mathrm{MN}-i+1}}+1 \right) \\
& \text{s.t.}
& &  \sum_{k=1}^{M}\sigma_{s,k} \leq P_\mathsf{tot}.
\end{aligned}
\end{equation}
Then we can obtain the solution using the celebrated water filling algorithm \cite{infoWVD}, such that 
\begin{equation*}
\sigma_{s,i}=\left(\frac{1}{\lambda} - \frac{\sigma_{w,\mathrm{MN}-i+1}}{\sigma_{h,i}} \right)^+,
\end{equation*}
where $\lambda$ is the waterlevel and is determined based on the total power, by solving the following equation
\begin{equation*}
\sum_{i=1}^{\mathrm{MN}}\left(\frac{1}{\lambda} - \frac{\sigma_{w,\mathrm{MN}-i+1}}{\sigma_{h,k}} \right)^+=P_\mathsf{tot}.
\end{equation*}
\subsection{Analysis of effect of spatial correlation on MI}
\label{sectionMIschur}
In this subsection, we analyze the MI expression, if it is Schur-convex or Schur-concave with respect to the eigenvalues of the target covariance matrix, and subsequently how the function behaves with respect to the correlation of signals reflected from the target scatterers. As per Lemma \ref{MIoptimlemma}, the eigenvalues of the noise and the target are assumed to be in oppositional order to obtain the optimum solution, as explained in Theorem \ref{fiedler}.  Therefore, we rewrite \eqref{maineig} as
\begin{equation}
\label{func}
f\left(\sigma_{h,k}\right)=\sum_{i=1}^{\mathrm{MN}}\log\left(\frac{\sigma_{s,i}\hspace{0.03in}\sigma_{h,i}}{\sigma_{w,\mathrm{MN}-i+1}}+1 \right).
\end{equation} 
Hence, to use Theorem \ref{majorizationgenr}, we assume that $\vect{\sigma_h}$ is a partially ordered vector,  $\sigma_{h,i}$ $>$ $\sigma_{h,j}$. Therefore, we can use Theorem \ref{majorizationgenr} to check for the Schur condition with respect to the eigenvalue of $\boldsymbol{\sigma}_h$ by taking the partial derivative of \eqref{func} such that
\begin{equation}
\frac{\partial f}{\partial \sigma_{h,i}}=\frac{\sigma_{s,i}}{\sigma_{h,i}\sigma_{s,i}+\sigma_{w,\mathrm{MN}-i+1}}.
\end{equation}
Since elements of $\boldsymbol{\sigma}_h$ are arranged in descending order, $\left(\sigma_{h,i} - \sigma_{h,j}\right) \geq 0$. Hence, the sign of
\begin{equation*}
\frac{\partial f}{\partial \sigma_{h,i}}-\frac{\partial f}{\partial \sigma_{h,j}},
\end{equation*}
which is defined as
\begin{equation}
\label{schurapplied}
\frac{\sigma_{s,i}}{\sigma_{h,i}\sigma_{s,i}+\sigma_{w,\mathrm{MN}-i+1}}-\frac{\sigma_{s,j}}{\sigma_{h,j}\sigma_{s,j}+\sigma_{w,\mathrm{MN}-j+1}},
\end{equation}
is totally dependent on the optimum power allocation values and the noise eigenvalues.
Herein, the behavior of the function will be analyzed at high and low $\mathsf{SNR}$.
\begin{lem}
	\label{highlowsnrnoncolored}
In case of non-colored, independent, identically distributed (i.i.d) noise, in high $\mathsf{SNR}$ regimes, the  water-filling solution to \eqref{maineig} is given by  $\boldsymbol{\sigma}_s=\frac{P_\mathsf{tot}}{\mathrm{MN}}\boldsymbol{1}^T$ (equal power allocation $p$), hence, \eqref{schurapplied} would be always smaller than zero, hence Schur-concave. 
However in low $\mathsf{SNR}$ regimes, the solution of \eqref{maineig} would be $\boldsymbol{\sigma}_s=\left[P_{\mathsf{tot}},0,\hdots,0\right]$, where the power is only given for the strongest eigen mode of the target. Consequently, \eqref{schurapplied} would be always positive, since the second term in \eqref{schurapplied} would be $0$, and the first term is positive, then according to Lemma \ref{schurcond}, the function is Schur-convex.
\end{lem}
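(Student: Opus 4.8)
The plan is to reduce both regimes to a direct sign check of the derivative difference in \eqref{schurapplied} after specialising to i.i.d.\ noise and inserting the two limiting water-filling solutions. For i.i.d.\ noise every noise eigenvalue equals a common value $\sigma_w$, so $\sigma_{w,\mathrm{MN}-i+1}=\sigma_w$ for all $i$ and the whole index dependence of \eqref{schurapplied} falls on the power allocation $\sigma_{s,i}$. I would therefore first record the general water-filling solution $\sigma_{s,i}=\left(\tfrac{1}{\lambda}-\tfrac{\sigma_w}{\sigma_{h,i}}\right)^{+}$ and then evaluate it in the two asymptotic power regimes before substituting back into \eqref{schurapplied}.

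For the high-$\mathsf{SNR}$ claim, I would let $P_\mathsf{tot}\to\infty$ so that the water level $1/\lambda$ dominates every offset $\sigma_w/\sigma_{h,i}$; all $\mathrm{MN}$ modes are then active and $\sigma_{s,i}\to P_\mathsf{tot}/\mathrm{MN}=:p$, i.e.\ equal power allocation. With $\sigma_{s,i}=p$ and the noise eigenvalues both constant, the objective \eqref{func} reduces to $\sum_{i}\log\!\big(\tfrac{p\,\sigma_{h,i}}{\sigma_w}+1\big)$, which is \emph{symmetric} in the $\sigma_{h,i}$, so the standard Schur-Ostrowski test of Lemma \ref{schurcond} applies. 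Substituting $p$ into \eqref{schurapplied} gives $\tfrac{p}{p\sigma_{h,i}+\sigma_w}-\tfrac{p}{p\sigma_{h,j}+\sigma_w}$, whose sign is opposite to that of $\sigma_{h,i}-\sigma_{h,j}$ because $x\mapsto p/(px+\sigma_w)$ is strictly decreasing; hence the Schur-Ostrowski product $(\sigma_{h,i}-\sigma_{h,j})(\partial f/\partial\sigma_{h,i}-\partial f/\partial\sigma_{h,j})\le 0$ for every pair and $f$ is Schur-concave.

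For the low-$\mathsf{SNR}$ claim, I would let $P_\mathsf{tot}\to 0$. Since the $\sigma_{h,i}$ are in descending order, the offset $\sigma_w/\sigma_{h,i}$ is smallest for $i=1$, so mode $1$ is the first, and for small enough power the only mode above the water level; this forces $\sigma_{s,1}=P_\mathsf{tot}$ and $\sigma_{s,i}=0$ for $i\ge 2$. Now \eqref{func} depends only on $\sigma_{h,1}$, so $f$ is no longer symmetric and I must instead invoke the partial-order generalisation in Theorem \ref{majorizationgenr} under the order $\sigma_{h,1}\ge\sigma_{h,2}\ge\cdots$. Here $\partial f/\partial\sigma_{h,1}=P_\mathsf{tot}/(P_\mathsf{tot}\sigma_{h,1}+\sigma_w)>0$ while $\partial f/\partial\sigma_{h,i}=0$ for $i\ge 2$, so $\partial f/\partial\sigma_{h,i}\ge\partial f/\partial\sigma_{h,j}$ whenever $i<j$ (strict for $i=1$, equality otherwise); by Theorem \ref{majorizationgenr} this is precisely the condition for $f$ to be Schur-convex.

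The main obstacle I anticipate is making the high-$\mathsf{SNR}$ equal-power statement precise rather than merely asymptotic: strictly, $\sigma_{s,i}=P_\mathsf{tot}/\mathrm{MN}+\tfrac{\sigma_w}{\mathrm{MN}}\sum_{k}\sigma_{h,k}^{-1}-\sigma_w/\sigma_{h,i}$, and I must argue that the index-dependent residual becomes negligible as $P_\mathsf{tot}\to\infty$ so that the symmetric reduction is justified. A secondary bookkeeping point is that the symmetric Lemma \ref{schurcond} is legitimate only once $f$ has actually become symmetric (high $\mathsf{SNR}$), whereas the low-$\mathsf{SNR}$ allocation breaks symmetry and genuinely requires the partial-order Theorem \ref{majorizationgenr}; the derivative inequality must moreover be checked for \emph{all} index pairs, not just adjacent ones, which here is immediate since the single nonzero derivative sits at the dominating index.
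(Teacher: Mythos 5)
Your proposal is correct and follows essentially the same route as the paper, which offers no separate proof beyond the sign check embedded in the lemma statement itself: insert the two limiting water-filling allocations into \eqref{schurapplied} and read off the sign (negative under equal power with common noise eigenvalue $\sigma_w$, positive when only the strongest mode is powered). Your two refinements --- treating equal power as the $P_\mathsf{tot}\to\infty$ limit of the exact allocation $\sigma_{s,i}=\frac{P_\mathsf{tot}}{\mathrm{MN}}+\frac{\sigma_w}{\mathrm{MN}}\sum_k\sigma_{h,k}^{-1}-\frac{\sigma_w}{\sigma_{h,i}}$, and invoking the partial-order Theorem \ref{majorizationgenr} instead of the symmetric Lemma \ref{schurcond} for the asymmetric low-$\mathsf{SNR}$ allocation --- only tighten the paper's argument, since the paper cites Lemma \ref{schurcond} at that point even though its own preceding setup (the passage introducing \eqref{func}) is built on Theorem \ref{majorizationgenr}.
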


\begin{thm}
	\label{schurcolored}
In case of colored-noise, in high $\mathsf{SNR}$ regimes, \eqref{schurapplied} is Schur-convex if 
\begin{equation}
\label{schurcondcolr}
\max_{1 \leq i < j \leq MN}  \frac{\sigma_{h,i}-\sigma_{h,j}}{\sigma_{w,MN-j+1}-\sigma_{w,MN-i+1}} \leq \frac{1}{p},
\end{equation}
and Schur-concave otherwise.	
\end{thm}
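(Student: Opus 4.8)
The plan is to apply the partially-ordered Schur--Ostrowski test of Theorem \ref{majorizationgenr} directly to the difference \eqref{schurapplied}, specialised to the power allocation that is optimal at high $\mathsf{SNR}$. Since the eigenvalues $\boldsymbol{\sigma}_h$ are arranged in descending order, the index set inherits the partial order $\sigma_{h,i}\ge\sigma_{h,j}$ for $i<j$, so by Theorem \ref{majorizationgenr} the function $f$ in \eqref{func} is Schur-convex precisely when $\frac{\partial f}{\partial \sigma_{h,i}}-\frac{\partial f}{\partial \sigma_{h,j}}\ge 0$ for every pair $i<j$, and Schur-concave when the reverse holds. Thus the whole statement reduces to determining the sign of \eqref{schurapplied}.

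First I would import the high-$\mathsf{SNR}$ behaviour established in Lemma \ref{highlowsnrnoncolored}: at high $\mathsf{SNR}$ the water level dominates the ratios $\sigma_{w,MN-i+1}/\sigma_{h,i}$, so the water-filling solution fills all $MN$ eigenmodes and allocates essentially equal power $\sigma_{s,i}=p=P_{\mathsf{tot}}/MN$. Substituting $\sigma_{s,i}=\sigma_{s,j}=p$ into \eqref{schurapplied} collapses it to a comparison of two positive fractions with common numerator $p$, so its sign is governed entirely by the denominators. Concretely, \eqref{schurapplied} is non-negative if and only if $p\,\sigma_{h,i}+\sigma_{w,MN-i+1}\le p\,\sigma_{h,j}+\sigma_{w,MN-j+1}$, which I would rearrange to $p(\sigma_{h,i}-\sigma_{h,j})\le \sigma_{w,MN-j+1}-\sigma_{w,MN-i+1}$.

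The final step makes both sides manifestly non-negative so that the inequality can be divided. For $i<j$ the descending order of $\boldsymbol{\sigma}_h$ gives $\sigma_{h,i}-\sigma_{h,j}\ge 0$, while $MN-j+1<MN-i+1$ together with the descending order of $\boldsymbol{\sigma}_w$ gives $\sigma_{w,MN-j+1}-\sigma_{w,MN-i+1}\ge 0$. Dividing then yields $\frac{\sigma_{h,i}-\sigma_{h,j}}{\sigma_{w,MN-j+1}-\sigma_{w,MN-i+1}}\le \frac{1}{p}$, and imposing this for \emph{every} admissible pair is equivalent to bounding the maximum over $1\le i<j\le MN$, which is exactly \eqref{schurcondcolr}; when that maximum exceeds $1/p$ some pair reverses the sign of \eqref{schurapplied}, making $f$ Schur-concave.

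I expect two points to carry the real weight. The first is justifying the equal-power substitution $\sigma_{s,i}=p$ in the colored-noise regime: unlike the i.i.d.\ case of Lemma \ref{highlowsnrnoncolored}, the colored water-filling allocation $\sigma_{s,i}=1/\lambda-\sigma_{w,MN-i+1}/\sigma_{h,i}$ is only \emph{approximately} flat, so the claim rests on the high-$\mathsf{SNR}$ limit in which the common water level swamps the mode-dependent corrections; I would state this explicitly as the operative high-$\mathsf{SNR}$ assumption. The second, more technical point is the treatment of degenerate denominators: if $\sigma_{w,MN-j+1}=\sigma_{w,MN-i+1}$ the division is invalid and that pair must be handled separately, where \eqref{schurapplied} is automatically non-negative whenever $\sigma_{h,i}\ge\sigma_{h,j}$, so the maximum in \eqref{schurcondcolr} is effectively taken over pairs with distinct noise eigenvalues.
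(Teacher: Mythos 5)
Your main argument coincides with the paper's own proof: rewrite \eqref{schurapplied} as a comparison of two fractions with common numerator, invoke the (approximately) flat water-filling allocation $\sigma_{s,i}=\sigma_{s,j}=p$ at high $\mathsf{SNR}$, reduce the sign question to $p\,\sigma_{h,i}+\sigma_{w,MN-i+1}\leq p\,\sigma_{h,j}+\sigma_{w,MN-j+1}$, and rearrange into the ratio condition \eqref{schurcondcolr}; requiring it for all pairs $i<j$ is the maximization. Your explicit flagging of the equal-power substitution as the operative high-$\mathsf{SNR}$ approximation is, if anything, more careful than the paper, which asserts it in a single clause.

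However, your closing remark on degenerate denominators has a sign error. If $\sigma_{w,MN-j+1}=\sigma_{w,MN-i+1}=w$ while $\sigma_{h,i}>\sigma_{h,j}$, then with equal power \eqref{schurapplied} equals
\begin{equation*}
\frac{p}{p\,\sigma_{h,i}+w}-\frac{p}{p\,\sigma_{h,j}+w}<0,
\end{equation*}
strictly \emph{negative} --- this is precisely the i.i.d.-noise situation of Lemma \ref{highlowsnrnoncolored}, where the function is Schur-concave. So such a pair is not ``automatically non-negative,'' and your prescription of taking the maximum in \eqref{schurcondcolr} only over pairs with distinct noise eigenvalues would let the condition hold vacuously while Schur-convexity in fact fails along that pair. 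The consistent convention is the opposite: read the ratio as $+\infty$ whenever the noise difference vanishes but the target difference does not, so that \eqref{schurcondcolr} fails and the theorem correctly returns Schur-concavity (only when both differences vanish is the pair harmless, since \eqref{schurapplied} is then zero). This is a peripheral flaw --- the paper itself does not address the degenerate case, and your main chain of inequalities is sound --- but as written your handling of it would misclassify exactly the cases it was introduced to cover.
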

\begin{proof}
We can further simplify \eqref{schurapplied} to be the following 
\begin{equation}
\label{schursimplify}
\left(\sigma_{h,i}+\frac{\sigma_{w,MN-i+1}}{\sigma_{s,i}}\right)^{-1}-\left(\sigma_{h,j}+\frac{\sigma_{w,MN-j+1}}{\sigma_{s,2}}\right)^{-1}.
\end{equation}
Hence, in order for \eqref{schursimplify} to be greater than 0, then the following must apply
\begin{equation}
\label{schursimplify2}
\sigma_{h,i}+\frac{\sigma_{w,MN-i+1}}{\sigma_{s,i}} \leq \sigma_{h,j}+\frac{\sigma_{w,MN-j+1}}{\sigma_{s,j}},
\end{equation}
since in high $\mathsf{SNR}$ regimes, the optimal water-filling solution is nearly equal power allocation $\sigma_{s,i}$$=$$\sigma_{s,j}$$=$$p_l$. Therefore, after some mathematical reordering in \eqref{schursimplify2}, we can get the result in \eqref{schurcondcolr}.
	\end{proof}
\begin{exmp}
	\label{example}
If we assumed $MN=4$, $\boldsymbol{\sigma}_h=[5,2,1,0.5]$ and $\boldsymbol{\sigma}_w=[8,4,3,2]$. Then we have 6 cases demonstrated in Table \ref{table} with their corresponding values of the left hand side (L.H.S) of \eqref{schurcondcolr}. The maximum value of \eqref{schurcondcolr} here occurs when $i=1$, and $j=2$, therefore in order to apply Theorem \ref{schurcolored}, then $p$ $\in (0,\:\frac{1}{3}]$.
\begin{table}[]
	\centering
	\begin{tabular}{@{}|c|c|c|c|@{}}
		\toprule
		instance & i & j & (18)  \\ \midrule
		1        & 1 & 2 & 3   \\ \midrule
		2        & 1 & 3 & 2     \\ \midrule
		3        & 1 & 4 & 0.75 \\ \midrule
		4        & 2 & 3 & 1   \\ \midrule
		5        & 2 & 4 & 0.3   \\ \midrule
		6        & 3 & 4 & 0.125 \\ \bottomrule
	\end{tabular}
		\caption{Evaluation of \eqref{schurcondcolr} using values in example \ref{example}}
		\label{table}
\end{table}
\end{exmp}
In low $\mathsf{SNR}$, the effect of colored noise will not be significant, as only the first eigen-mode of the target would be triggered, hence lemma \ref{highlowsnrnoncolored} will hold as well in case of low $\mathsf{SNR}$ with colored noise.\\
This changing behavior of the MI in low and high $\mathsf{SNR}$, gives indication that according to Definition 3 and 4, spatially correlated channels behave better in low $\mathsf{SNR}$, however in high $\mathsf{SNR}$, it is better to have uncorrelated channel.
%
\\
In the following section, we simulate this changing behavior by controlling the spatial correlation. Moreover, for further insights, we manipulate the spatial correlation conditions for MIMO radar which was previously discussed in section \ref{SystemModel}, by changing the operating frequency and analyze its effect in low and high $\mathsf{SNR}$.
\section{Simulations}
\label{sim}
\subsection{Schur Convexity and Schur Concavity of MI }
In the first set of simulations, the performance of the MI function is analyzed across different spatial correlations. In theorem \ref{highlowsnrnoncolored}, it is proven that MI has changing behavior in high and low $\mathsf{SNR}$ regimes. Here, we illustrate this behavior through numerical evaluation.  We assume that $M=N=2$ and $K=2$. The eigenvalues of $\mathbf{R}_w$ for colored noise case are $[8,4,3,2]$. We keep the eigenvalues of the noise fixed, and change the total power value to vary the $\mathsf{SNR}$. In order to simulate the effect of correlation, the eigenvalues of $\mathbf{R}_h$ are defined as $\boldsymbol{\sigma_h}$$=$$\tau*[1,0,0,0]+(1-\tau)[0.25,0.25,0.25,0.25]$, hence, the eigenvalues  will vary from uncorrelated when $\tau=0$ to highly correlated when $\tau=1$. 
 \begin{figure}
		\centering
%
%
\definecolor{mycolor1}{rgb}{0.00000,0.44700,0.74100}%
\definecolor{mycolor2}{rgb}{0.85000,0.32500,0.09800}%
\definecolor{mycolor3}{rgb}{0.92900,0.69400,0.12500}%
\begin{tikzpicture}

\begin{axis}[%
width=3in,
height=2in,
at={(0.758in,0.481in)},
scale only axis,
xmin=0,
xmax=1,
xlabel style={font=\color{white!15!black}},
xlabel={$\tau$},
ymin=0.4,
ymax=1,
ylabel style={font=\color{white!15!black}},
ylabel={Normalized Mutual Information},
axis background/.style={fill=white},
xmajorgrids,
ymajorgrids,
legend style={at={(0.395,0.265)}, anchor=south west, legend cell align=left, align=left, draw=white!15!black}
]
\addplot [color=mycolor1, line width=1.5pt, mark=+, mark options={solid, mycolor1}]
  table[row sep=crcr]{%
0	0.567787075935456\\
0.1	0.622142366654015\\
0.2	0.674343432882135\\
0.3	0.723267576476507\\
0.4	0.771267380777282\\
0.5	0.81844099970817\\
0.6	0.861173429745502\\
0.7	0.900153648332987\\
0.8	0.935987736203103\\
0.9	0.969145837719839\\
1	1\\
};
\addlegendentry{SNR = 0dB}

\addplot [color=mycolor2, line width=1.5pt, mark=o, mark options={solid, mycolor2}]
  table[row sep=crcr]{%
0	0.859058794929678\\
0.1	0.887186117564985\\
0.2	0.90548178054544\\
0.3	0.916474733304863\\
0.4	0.922276553127326\\
0.5	0.925074486325087\\
0.6	0.927907565763826\\
0.7	0.935342352774133\\
0.8	0.953062123516244\\
0.9	0.977446537206346\\
1	1\\
};
\addlegendentry{SNR = 5dB}

\addplot [color=mycolor3, line width=1.5pt, mark=triangle, mark options={solid, mycolor3}]
  table[row sep=crcr]{%
0	1\\
0.1	0.996872085989904\\
0.2	0.987361100104969\\
0.3	0.972188556116229\\
0.4	0.951221748014807\\
0.5	0.923596398972602\\
0.6	0.887489737129871\\
0.7	0.839378606044602\\
0.8	0.77192100685014\\
0.9	0.66656155635414\\
1	0.497406765219843\\
};
\addlegendentry{SNR = 20dB}

\end{axis}
\end{tikzpicture}%
		\caption{Normalized Mutual Information (MI) (with respect to the maximum value) as function of $\tau$ which represents the degree of correlation ($\tau = 0$ totally uncorrelated, $\tau$ = 1 totally correlated channel) for different total $\mathsf{SNR}$ values (0 dB,5 dB, and 20 dB) assuming colored noise.}
		\label{fig:total}
	\end{figure}
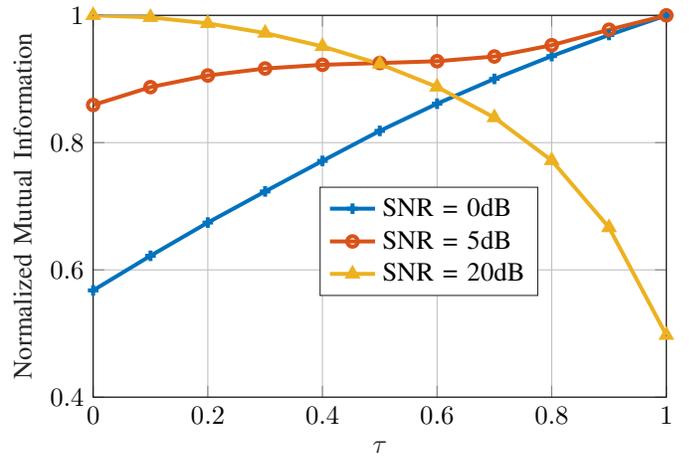
%
\\ In figure \ref{fig:total}, the MI is plotted across different $\mathsf{SNR}$ values, the MI is normalized at each $\mathsf{SNR}$, where at  $\mathsf{SNR}$ 0, it can be seen that the MI is increasing as the correlation increases. Since the MI is Schur-convex at low $\mathsf{SNR}$, it increases with increasing the correlation. However, when the $\mathsf{SNR}$ is increased to be 20 dB, the function has a decreasing behavior since it is Schur-concave at high $\mathsf{SNR}$, where the maximum of MI is achieved when $\tau$$=$$0$, and then decreases with increasing the correlation. Yet, at intermediate $\mathsf{SNR}$ at 5 dB, the function is not behaving neither Schur- convex nor concave.
\subsection{Spatially correlated MIMO Radar setup}
In the second set of simulations, we simulate the scenario in Figure \ref{fig:targets} using the model of a widely separated MIMO radar with $M$$=2$ and $N$$=2$. Here, we want to check the effect of the operating frequency on the spatial correlation conditions in \eqref{eq:conditions}. Hence, we carefully chose the other factors defined in eq. \eqref{eq:conditions} such that it will not affect the correlation, to verify the effect of frequency. Hence, the coordinates of the transmitter are (2,4.8) and (2.2,4) meters, while the receivers are located at (0,2) and (0,4). We assume that there is a distributed target with $\mathrm{Q}=1000$, its center is located at (2,2) meters, and $\mathrm{d}_x=\mathrm{d}_y=2\mathrm{m}$. The parameters are chosen such that we have two different channels $\mathbf{H}$, one spatially correlated by violating the rules in \eqref{eq:conditions}, and the other matrix spatially de-correlated. From the four factors stated, we changed the first working frequency to be $f_c=0.1$ GHz, which makes the channel correlated, while in the second case $f_c=8$ GHz, decreasing $\lambda_c$ and therefore obeying the mentioned conditions. Figure \ref{fig:MI} shows the performance of MI at both frequencies, which agrees with the behavior explained before, since the low frequency curve performs better at low $\mathsf{SNR}$, where the spatial correlation is high.  This agrees with corollary \ref{schurcolored}  where at low $\mathsf{SNR}$ the function is Schur-convex. However, as the $\mathsf{SNR}$ increases, the high frequency curve achieves higher MI, since the spatial correlation decreases. Accordingly, to achieve maximum MI at high $\mathsf{SNR}$, the channel elements must be de-correlated.
\begin{figure}[!h]
%
%
\begin{tikzpicture}

\begin{axis}[%
width=3in,
height=1.5in,
at={(0.758in,0.481in)},
scale only axis,
xmin=-10,
xmax=10,
xlabel style={font=\color{white!15!black}},
xlabel={SNR (dB)},
ymin=0,
ymax=20,
ylabel style={font=\color{white!15!black}},
ylabel={Mutual Information},
axis background/.style={fill=white},
xmajorgrids,
ymajorgrids,
legend style={legend cell align=left, align=left,anchor=east, draw=white!15!black}
]
\addplot [color=blue, line width=1.0pt]
  table[row sep=crcr]{%
-10	2.39678250527724\\
-9	2.7802239387949\\
-8	3.11863408853907\\
-7	3.53763654999734\\
-6	3.94801380933002\\
-5	4.46795330654987\\
-4	5.06572799515862\\
-3	5.54154786393133\\
-2	6.15293926590859\\
-1	6.79924412024768\\
0	7.59614978002305\\
1	8.38102143741275\\
2	8.9816785515242\\
3	9.80600819502294\\
4	10.8290882693529\\
5	11.742736086759\\
6	12.6196822738764\\
7	13.5375625458619\\
8	14.4272746514607\\
9	15.3717660475215\\
10	16.3830352434754\\
};
\addlegendentry{0.1 GHz}

\addplot [color=red, line width=1.0pt,mark=asterisk]
table[row sep=crcr]{%
-10	1.70084595092424\\
-9	1.94248463072349\\
-8	2.35886623187378\\
-7	2.72570633091118\\
-6	3.25692963746924\\
-5	3.7546032953036\\
-4	4.17778977102001\\
-3	4.89505363208945\\
-2	5.5764258858695\\
-1	6.40979694197789\\
0	7.20616258849724\\
1	8.04148280816706\\
2	8.98642878500295\\
3	10.024541682644\\
4	11.1284455913656\\
5	12.3543043632498\\
6	13.4685981587349\\
7	14.7464480557701\\
8	15.7091641862801\\
9	17.0356216057105\\
10	18.2988515254482\\
};
\addlegendentry{8 GHz}

\end{axis}

\end{tikzpicture}%
		\caption{Mutual Information as function of $\mathsf{SNR}$ for two different operating frequencies at 0.1 GHz (highly correlated channel) and 8 GHz (less correlated channel) showing the change in behavior of MI function in high and low $\mathsf{SNR}$ .}
		\label{fig:MI}
	\end{figure}
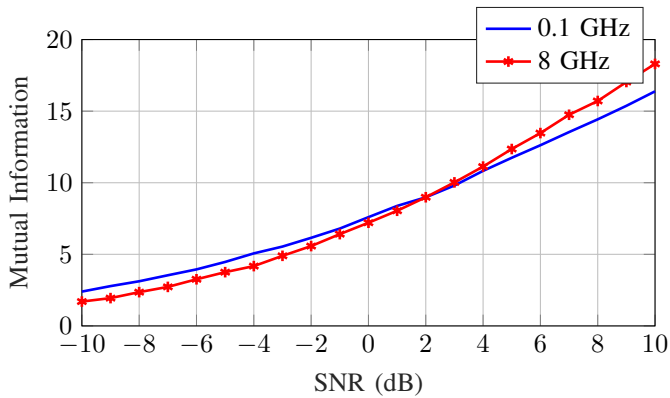

\section{Conclusion}
\label{conc}
In this paper, we discussed the effect of spatial correlation in a statistical MIMO radar. We used MI between the target random response and the reflected signal as a metric in presence of colored noise. We proved that MI is a Schur-convex function with respect to spatial correlation at low SNR, i.e monotonically increasing function. Contrarily, this behavior changes at high SNR, and the function is Schur-concave. Moreover, we applied those findings on statistical MIMO radar setup, by changing the operation frequency to control the spatial correlation of the reflected paths. The simulations show that at low SNR, the performance of the radar is better at low frequencies, which is surpassed by the high frequency operating radar at high SNR conditions.


\section*{Acknowledgment}

This work was supported by the German Research Foundation (DFG) for the CRC/TRR 196 [MARIE] under project S0-3.



%
%
%
\bibliographystyle{IEEEtran}
\bibliography{conf}

\end{document}